\documentclass[11pt]{article}


\usepackage[T1]{fontenc}
\usepackage[sc]{mathpazo}
\usepackage{amsmath}
\usepackage{amssymb}
\usepackage{enumerate}
\usepackage{amsthm}
\usepackage{amsfonts,mathrsfs}

\setlength{\topmargin}{0in} \setlength{\headheight}{0in}
\setlength{\headsep}{0.0in} \setlength{\textheight}{8.85in}
\setlength{\oddsidemargin}{0in} \setlength{\evensidemargin}{0in}
\setlength{\textwidth}{6.5in}


\usepackage{hyperref}
\hypersetup{pdfpagemode=UseNone}


\newcommand{\tinyspace}{\mspace{1mu}}

\newcommand{\abs}[1]{\left\lvert\tinyspace #1 \tinyspace\right\rvert}

\renewcommand{\t}{{\scriptscriptstyle\mathsf{T}}}

\newcommand{\setft}[1]{\mathrm{#1}}

\newcommand{\density}[1]{\setft{D}\left(#1\right)}
\newcommand{\unitary}[1]{\setft{U}\left(#1\right)}

\newcommand{\supp}{{\operatorname{supp}}}

\def\real{\mathbb{R}}

\def\I{\mathbb{1}}

\newenvironment{mylist}[1]{\begin{list}{}{
    \setlength{\leftmargin}{#1}
    \setlength{\rightmargin}{0mm}
    \setlength{\labelsep}{2mm}
    \setlength{\labelwidth}{8mm}
    \setlength{\itemsep}{0mm}}}
    {\end{list}}


\newcommand{\inner}[2]{\langle #1 , #2\rangle}

\newcommand{\out}[2]{| #1\rangle\langle #2 |}
\newcommand{\Inner}[2]{\left\langle #1 , #2\right\rangle}
\newcommand{\Innerm}[3]{\left\langle #1 \left| #2 \right| #3 \right\rangle}



\newcommand{\Pa}[1]{\left(#1\right)}
\newcommand{\br}[1]{[#1]}
\newcommand{\Br}[1]{\left[#1\right]}
\newcommand{\set}[1]{\{#1\}}
\newcommand{\Set}[1]{\left\{#1\right\}}



\newcommand{\ket}[1]{|#1\rangle}





\DeclareMathOperator{\trace}{Tr}

\newcommand{\Ptr}[2]{\trace_{#1}\Pa{#2}}

\newcommand{\Tr}[1]{\Ptr{}{#1}}












\def\cH{\mathcal{H}}

\def\cS{\mathcal{S}}
\def\cU{\mathcal{U}}


\def\bB{\mathbf{B}}


\def\rF{\mathrm{F}}\def\rH{\mathrm{H}}

\def\rS{\mathrm{S}}




\newtheorem{thrm}{Theorem}[section]
\newtheorem{lem}[thrm]{Lemma}

\theoremstyle{definition}

\newtheorem{remark}[thrm]{Remark}

\numberwithin{equation}{section}


\newcounter{questionnumber}

\begin{document}

\title{\Large Quantum fidelity and relative entropy between unitary orbits}

\author{Lin Zhang\footnote{E-mail: godyalin@163.com}\\[1mm]
  {\it\small Institute of Mathematics, Hangzhou Dianzi University, Hangzhou 310018, PR~China}\\
  Shao-Ming Fei\\[1mm]
  {\it\small School of Mathematics of Sciences, Capital Normal University, Beijing 100048, PR China}\\[1mm]
  {\it\small Max-Planck-Institute for Mathematics in the Sciences, Leipzig 04103, Germany}}

\date{}
\maketitle \mbox{}\hrule\mbox\\
\begin{abstract}

Fidelity and relative entropy are two significant quantities in
quantum information theory. We study the quantum fidelity and
relative entropy under unitary orbits. The maximal and minimal
quantum fidelity and relative entropy between two unitary orbits are
explicitly derived. The potential applications in quantum
computation and information processing are discussed.

\end{abstract}
\mbox{}\hrule\mbox

\section{Introduction}

There are many important quantities in characterizing a bipartite or
multipartite quantum state, such as the mutual information, quantum
correlation and entanglement etc. in quantum information theory. To
investigate the variations of such quantities under \emph{unitary
dynamics} has practical applications. In \cite{San,MU}, by searching
for the maximally and minimally correlated states on a unitary
orbit, the authors studied the amount of correlations, quantified by
the quantum mutual information, attainable between the components of
a quantum system, when the system undergoes isolated, unitary
dynamics. The correlations in a bipartite or multipartite state
within the construction of unitary orbits have been also examined in
\cite{MG}.

In fact, there are also many important quantities in characterizing
the relations between two quantum states, such as the quantum
fidelity and relative entropy. They give rise to the measures of a
kind of distance between two quantum states. They can be also used
to characterize the property of a given quantum state, for instance,
to quantify the quantum entanglement between two parts of a state,
which is the shortest distance between the state and the set of all
separable states. Such distances between two quantum states have
many applications in quantum information processing. In \cite{Karol}
it has been shown that the problem of deterministically quantum
state discrimination is equivalent to that of embedding a simplex of
points whose distances are maximal with respect to the Bures
distance or trace distance of two quantum states.

In this paper, we study the quantum fidelity and relative entropy
under arbitrary unitary dynamics. Under general unitary evolutions,
every given quantum state belongs to a continuous orbit. We analyze
the `distance' between two quantum states under general unitary evolutions:
the maximal and minimal quantum fidelity and relative entropy
between two such unitary orbits, by using the
combinatory techniques in majorization theory and operator monotones.
It is also shown that they are intervals between these
minimal and the maximal values.

The paper is organized as follows: In Sect.~\ref{sect:fidelity} we
derive the maximal and minimal values of the quantum fidelity
between the unitary orbits of two quantum states. Moreover, we prove
that the values of the quantum fidelity fill out an interval. We
also discuss the fidelity evolution generated by Hamiltonian. In
Sect.~\ref{sect:relative-entropy} we consider the optimal problems
for relative entropy and derive the maximal and minimal values of
the relative entropy between the unitary orbits of two quantum
states. We summarize and discuss in Sect.~\ref{sect:conclusion}.

\section{Quantum fidelity between unitary orbits}\label{sect:fidelity}

The \emph{fidelity} between two $d\times d$ quantum states,
represented by density operators $\rho$ and $\sigma$, is defined as
\begin{eqnarray}
\rF(\rho, \sigma ) = \Tr{\sqrt{\sqrt{\rho}\sigma\sqrt{\rho}}} \equiv
\Tr{\abs{\sqrt{\rho}\sqrt{\sigma}}}.
\end{eqnarray}
This is an extremely fundamental and useful quantity in quantum
information theory. In quantum information processing, one wishes to
transform a given quantum state to the final target state.
For instance, in the quantum computation with qubits or qutrits,
it is essential to estimate the "distance" between the desired target state
and the approximate state that can be realized by projected Hamiltonian \cite{nielsen,lyf}.
Practically, due to the inevitable
interaction between the quantum systems and its environment and possible experimental imperfectness,
it is crucial to characterize quantitatively to what extent can an evolved quantum
state be close to the target state. For this purpose, the fidelity is often used as the measure of
the distance between two quantum states.
The squared fidelity above has been called \emph{transition probability} \cite{Uhlmann76,Uhlmann2011}.
Operationally it is the maximal success probability of transforming a state to another one
by measurements on a larger quantum system. The fidelity is also employed in a number of
problems such as quantifying entanglement \cite{Vedral} and quantum error correction \cite{Kosut}.

Let $\unitary{\cH_d}$ denote the set of $d\times d$ unitary matrices
on $d$-dimensional Hilbert space $\cH_d$. For a given density matrix
$\rho$, its \emph{unitary orbit} is defined by
\begin{eqnarray}
\cU_\rho = \Set{U\rho U^\dagger : U\in \unitary{\cH_d}}.
\end{eqnarray}
Clearly a density operator $\rho$ whose evolution is governed by a
von Neumann equation remains in a single orbit $\cU_\rho$. The
orbits $\cU_\rho$ are in one-to-one correspondence with the possible
spectra for density operators $\rho$.

We investigate the bound of the quantum fidelity between the unitary
orbits $\cU_\rho$ and $\cU_\sigma$ of two quantum states $\rho$ and
$\sigma$. Due to the unitary invariance of fidelity, the problem
boils down to determining the following extremes:
$\min_{U\in\unitary{\cH_d}} \rF(\rho, U\sigma U^\dagger)$ and
$\max_{U\in\unitary{\cH_d}} \rF(\rho, U\sigma U^\dagger)$.

Let $\rF(p,q)= \sum_j \sqrt{p_jq_j}$ denote the classical fidelity
between two probability distributions $p=\set{p_j}$ and
$q=\set{q_j}$.

\begin{thrm}\label{th:1}
The quantum fidelity between the unitary orbits $\cU_\rho$ and
$\cU_\sigma$ satisfy the following relations;
\begin{eqnarray}
\max_{U\in\unitary{\cH_d}} \rF(\rho, U\sigma U^\dagger)
&=& \rF(\lambda^\downarrow(\rho),\lambda^\downarrow(\sigma)),\label{t11}\\
\min_{U\in\unitary{\cH_d}} \rF(\rho, U\sigma U^\dagger) &=&
\rF(\lambda^\downarrow(\rho),\lambda^\uparrow(\sigma)),\label{t12}
\end{eqnarray}
where $\lambda^{\downarrow}(\rho)$ (resp. $\lambda^{\uparrow}(\rho)$
is the probability vector consisted of the eigenvalues of $\rho$,
listed in decreasing (resp. increasing) order.
\end{thrm}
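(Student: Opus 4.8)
The plan is to reduce the two orbit optimizations to classical scalar problems about the sorted eigenvalue vectors. First I would exploit the unitary invariance of the trace norm together with the identity $\sqrt{U\sigma U^\dagger}=U\sqrt\sigma\,U^\dagger$ to rewrite
\[
\rF(\rho, U\sigma U^\dagger) = \tnorm{\sqrt\rho\, U\sqrt\sigma\, U^\dagger} = \tnorm{\sqrt\rho\, U\sqrt\sigma},
\]
so that the whole problem becomes the determination of the extrema of $\tnorm{\sqrt\rho\, U\sqrt\sigma}$ as $U$ ranges over $\unitary{\cH_d}$. Here $\sqrt\rho$ and $\sqrt\sigma$ are positive with singular values $\sqrt{\lambda^\downarrow_j(\rho)}$ and $\sqrt{\lambda^\downarrow_j(\sigma)}$ respectively, which is what ties the answer to the classical fidelity $\rF$ of the spectra.

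For the maximum \eqref{t11} I would invoke the dual characterization $\tnorm{Z}=\max_W\abs{\Tr{WZ}}$ over unitaries $W$, which gives $\max_U\rF(\rho,U\sigma U^\dagger)=\max_{U,W}\abs{\Tr{W\sqrt\rho\, U\sqrt\sigma}}$. As $U$ and $W$ range over all unitaries, the product $W\sqrt\rho\,U$ ranges over exactly the matrices with singular values $\{\sqrt{\lambda^\downarrow_j(\rho)}\}$; hence by von Neumann's trace inequality the maximum of $\abs{\Tr{(W\sqrt\rho\,U)\sqrt\sigma}}$ equals $\sum_j\sqrt{\lambda^\downarrow_j(\rho)}\,\sqrt{\lambda^\downarrow_j(\sigma)}=\rF(\lambda^\downarrow(\rho),\lambda^\downarrow(\sigma))$, the bound being attained by aligning the singular vectors. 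This settles \eqref{t11}.

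The minimum \eqref{t12} is harder, since the dual norm formula produces a maximum rather than a minimum, and one cannot simply exchange the order in $\min_U\max_W$ because the orbit $\cU_\sigma$ is not convex. My plan is instead to replace the fidelity by the variational (Alberti) expression $\rF(\rho,\sigma)=\min_{X>0}\tfrac12\Pa{\Tr{X\rho}+\Tr{X^{-1}\sigma}}$, verified by observing that the convex objective is minimized at the positive $X$ solving $X\rho X=\sigma$, namely $X=\rho^{-1/2}(\rho^{1/2}\sigma\rho^{1/2})^{1/2}\rho^{-1/2}$. Then $\min_U\rF(\rho,U\sigma U^\dagger)$ becomes a double minimum over $U$ and $X$, which may be freely interchanged. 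For fixed $X$, the inner minimization of $\Tr{X^{-1}U\sigma U^\dagger}$ over $U$, and the minimization of $\Tr{X\rho}$ over the eigenvectors of $X$, are both governed by the rearrangement (Ruhe) trace inequality, reducing everything to $\min_{\mu_1\ge\cdots\ge\mu_d>0}\tfrac12\sum_j\Pa{\mu_j\lambda^\uparrow_j(\rho)+\lambda^\downarrow_j(\sigma)/\mu_j}$. Minimizing each summand separately yields $\mu_j=\sqrt{\lambda^\downarrow_j(\sigma)/\lambda^\uparrow_j(\rho)}$ and the value $\sum_j\sqrt{\lambda^\uparrow_j(\rho)\lambda^\downarrow_j(\sigma)}=\rF(\lambda^\downarrow(\rho),\lambda^\uparrow(\sigma))$.

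I expect the main obstacle to be precisely this minimum: making the interchange of the two minimizations rigorous and checking that the unconstrained per-coordinate minimizers automatically respect the monotonicity $\mu_1\ge\cdots\ge\mu_d$ forced by $X$ being a single positive operator (here one uses that $\lambda^\downarrow_j(\sigma)$ decreases while $\lambda^\uparrow_j(\rho)$ increases in $j$, so $\mu_j$ is indeed decreasing). A secondary technical point is handling rank-deficient $\rho$ or $\sigma$, for which $X^{-1}$ is undefined; this I would dispatch by a continuity argument, establishing the identities first for positive-definite $\rho,\sigma$ and then passing to the limit.
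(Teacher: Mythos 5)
Your proposal is correct, but it follows a genuinely different route from the paper's on both halves. For the maximum \eqref{t11}, the paper writes $\sqrt{\rho}\,U\sigma U^\dagger\sqrt{\rho}$ as a single exponential via Wasin-So's formula $\exp(A/2)\exp(B)\exp(A/2)=\exp\Pa{V_1AV_1^\dagger+V_2BV_2^\dagger}$, applies the Golden--Thompson inequality together with its equality condition $[A,B]=0$, and uses compactness of $\unitary{\cH_d}$ to conclude that a maximizing $U$ must make $\rho$ and $U\sigma U^\dagger$ commute, so that only permutations remain to be checked. Your argument --- rewriting $\rF(\rho,U\sigma U^\dagger)=\tnorm{\sqrt{\rho}\,U\sqrt{\sigma}}$ and combining the duality $\tnorm{Z}=\max_W\abs{\Tr{WZ}}$ with von Neumann's trace inequality, noting that $W\sqrt{\rho}\,U$ sweeps out all matrices with the singular values $\sqrt{\lambda^\downarrow_j(\rho)}$ --- reaches the same conclusion more directly and sidesteps the delicate equality-case bookkeeping for Golden--Thompson on which the paper leans. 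For the minimum \eqref{t12}, the roles are reversed: the paper's argument is the lighter one, using only $\Tr{\abs{A}}\geqslant\Tr{A}$, the rearrangement inequality \eqref{l33}, and the explicit flip unitary $U\ket{j}=\ket{d-j+1}$ to attain equality, whereas you invoke Alberti's variational formula $\rF(\rho,\sigma)=\min_{X>0}\tfrac12\Pa{\Tr{X\rho}+\Tr{X^{-1}\sigma}}$, interchange the two infima, and solve a per-coordinate scalar problem. Your route is heavier but sound: the interchange of infima is unconditionally valid; after the inner minimization over $U$ the $\sigma$-term depends only on the spectrum of $X$, so the eigenvector freedom of $X$ can indeed be spent on $\Tr{X\rho}$ via the rearrangement inequality; and your check that the unconstrained minimizers $\mu_j=\sqrt{\lambda^\downarrow_j(\sigma)/\lambda^\uparrow_j(\rho)}$ are automatically decreasing is precisely what legitimizes the coordinatewise optimization. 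Both treatments dismiss singular states by the same continuity argument. The net trade-off: your maximum proof is cleaner and more elementary than the paper's, while your minimum proof spends Alberti's theorem on a fact the paper obtains in two lines.
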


\begin{proof}
We prove this Theorem for non-singular density matrices. The
general case follows by continuity. Indeed, assume that the theorem is correct
for non-singular density matrices.
Let $\sigma$ be singular. Then $\sigma+\varepsilon\I$ is non-singular.
Since $\lim_{\varepsilon\to0^+}\lambda^\downarrow(\sigma+\varepsilon\I) = \lambda^\downarrow(\sigma)$ and
$$
\max_{U\in\unitary{\cH_d}} \rF(\rho, U(\sigma+\varepsilon\I) U^\dagger)
= \rF(\lambda^\downarrow(\rho),\lambda^\downarrow(\sigma+\varepsilon\I)),
$$
by taking the limit $\varepsilon\to0^+$, we have that the theorem will be also true for singular density matrices.

Since the eigenvectors of two density matrices can always be
connected via a unitary, the problem is reduced to the case where
$[\rho,\sigma]=0$. Without loss of generality, we assume that $\rho$
and $\sigma$ have the following spectral decompositions:
$$
\rho = \sum^d_{j=1}
\lambda^\downarrow_j(\rho)\out{j}{j}\quad\text{and}\quad\sigma =
\sum^d_{j=1} \lambda^\downarrow_j(\sigma)\out{j}{j},
$$
where $\lambda^\downarrow_j(\rho)$ and
$\lambda^\downarrow_j(\sigma)$ are the eigenvalues of states $\rho$
and $\sigma$ respectively.

It has been shown that for any $n\times n$ Hermitian matrices $A$
and $B$, there exist two unitary matrices $V_1$ and $V_2$ such that
\cite{So},
$$
\exp{(A/2)}\exp{(B)}\exp{(A/2)}= \exp\Pa{V_1 A V^\dagger_1+V_2 B
V^\dagger_2}.
$$
Hence for Hermitian matrices $\rho$ and $U\sigma U^\dagger$, we have
$V_1$ and $V_2\in \unitary{\cH_d}$ such that
\begin{eqnarray}
\sqrt{\rho}U\sigma U^\dagger\sqrt{\rho} = \exp\Pa{V_1\log\rho
V^\dagger_1+V_2U\log\sigma U^\dagger V^\dagger_2}.
\end{eqnarray}
Therefore
\begin{eqnarray*}
&&\rF(\rho,U\sigma U^\dagger) = \Tr{\sqrt{\sqrt{\rho}U\sigma
U^\dagger\sqrt{\rho}}}\\
&&= \Tr{\exp\Pa{\frac{V_1\log\rho V^\dagger_1+V_2U\log\sigma
U^\dagger V^\dagger_2}2}}\\
&&= \Tr{\exp\Pa{\frac{\log\rho + \widehat U\log\sigma \widehat
U^\dagger}2}},
\end{eqnarray*}
where $\widehat U = V^\dagger_1V_2U$.

As for arbitrary Hermitian matrices $A$ and $B$, one has the
\emph{Golden-Thompson's inequality}:
$$
\Tr{e^{A+B}}\leqslant \Tr{e^A e^B},
$$
in which the equality holds if
and only if $[A,B]=0$ \cite{Soexp,Ruskai}, we have
\begin{eqnarray}\label{eq:G-T=}
\Tr{\exp\Pa{\frac{\log\rho + \widehat U\log\sigma \widehat
U^\dagger}2}} \leqslant \Tr{\sqrt{\rho}\widehat
U\sqrt{\sigma}\widehat U^\dagger}\leqslant \rF(\rho,\widehat
U\sigma\widehat U^\dagger).
\end{eqnarray}
Since the unitary group $\unitary{\cH_d}$ is compact, the
supremum is actually attained on some unitary. Let $U_0\in
\unitary{\cH_d}$ be such that
\begin{eqnarray*}
\max_U\rF(\rho,U\sigma U^\dagger)= \rF(\rho,U_0\sigma U^\dagger_0)=
\Tr{\exp\Pa{\frac{\log\rho + \widehat U_0\log\sigma \widehat
U^\dagger_0}2}}.
\end{eqnarray*}
We see that $\rF(\rho,U_0\sigma U^\dagger_0) = \rF(\rho,\widehat U_0\sigma
\widehat U^\dagger_0)$, namely, the inequality \eqref{eq:G-T=} must be an equality. Hence
$\Br{\rho,\widehat U_0\sigma \widehat U^\dagger_0}=0$, and $\widehat
U_0$ is just a permutation operator since $[\rho,\sigma]=0$.

We have shown that if $[\rho,\sigma]=0$, then there exists a
permutation matrix $P$ such that
$$
\max_U \rF(\rho,U\sigma U^\dagger) = \rF(\rho,P\sigma P^\dagger).
$$
Obviously the maximum is attained when the permutation $P$ is the
identity operator $\I_d$. That is, if $[\rho,\sigma]=0$, then
$$
\max_U\rF(\rho,U\sigma U^\dagger) = \rF(\rho,\sigma) = \sum^d_{j=1}
\sqrt{\lambda^\downarrow_j(\rho)\lambda^\downarrow_j(\sigma)},
$$
which proves \eqref{t11}.

On the other hand, we have
\begin{eqnarray*}
\rF(\rho,U\sigma U^\dagger) &=&
\Tr{\abs{\sqrt{\rho}U\sqrt{\sigma}U^\dagger}} \geqslant
\Tr{\sqrt{\rho}U\sqrt{\sigma}U^\dagger}.
\end{eqnarray*}
Since for Hermitian matrices $A$ and $B$ \cite{Bhatia},
\begin{eqnarray}\label{l33}
\inner{\lambda^\downarrow(A)}{\lambda^\uparrow(B)}\leqslant \Tr{AB}
\leqslant \inner{\lambda^\downarrow(A)}{\lambda^\downarrow(B)},
\end{eqnarray}
where $\inner{u}{v}:= \sum_j \bar u_j v_j$, we obtain
\begin{eqnarray}
\min_U \rF(\rho,U\sigma U^\dagger) &\geqslant&
\min_U\Tr{\sqrt{\rho}U\sqrt{\sigma}U^\dagger}=
\sum^d_{j=1}\sqrt{\lambda^\downarrow_j(\rho)\lambda^\uparrow_j(\sigma)}.
\end{eqnarray}
The above inequality becomes an equality for $U\in \unitary{\cH_d}$
such that $U\ket{j} = \ket{d-j+1}$, which proves \eqref{t12}.
\end{proof}

Theorem~\ref{th:1} gives a easy way to estimate the maximal and
minimal values of the quantum fidelity between the unitary orbits of
two quantum states. They are simply given by the eigenvalues of the
density matrices.

\begin{thrm}\label{th:2}
The set $\Set{\rF(\rho, U\sigma U^\dagger):  U\in \unitary{\cH_d}}$
is identical to the interval
\begin{eqnarray}
\Br{\rF(\lambda^\downarrow(\rho),\lambda^\uparrow(\sigma)),
\rF(\lambda^\downarrow(\rho),\lambda^\downarrow(\sigma))}.
\end{eqnarray}
\end{thrm}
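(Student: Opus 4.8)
The key observation is that Theorem~\ref{th:1} has already pinned down the two endpoints: the maximum $\rF(\lambda^\downarrow(\rho),\lambda^\downarrow(\sigma))$ and the minimum $\rF(\lambda^\downarrow(\rho),\lambda^\uparrow(\sigma))$ of the fidelity over the unitary orbit. So the set in question is certainly contained in the stated interval, and both endpoints are achieved by specific unitaries (the identity and the order-reversing permutation, respectively). What remains is to show the set has no gaps — that every intermediate value is also attained.

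First I would define the function $f\colon \unitary{\cH_d}\to\real$ by $f(U)=\rF(\rho,U\sigma U^\dagger)$ and note that $f$ is continuous, since $U\mapsto U\sigma U^\dagger$ is continuous and the fidelity $\rF$ depends continuously on its arguments (the square-root and trace operations involved are continuous on the relevant domains). The crucial structural fact I would use is that the unitary group $\unitary{\cH_d}$ is path-connected: given any two unitaries $U_0,U_1$, one can write $U_1 U_0^\dagger = \exp(iH)$ for some Hermitian $H$ and form the continuous path $U_t = \exp(itH)U_0$ joining them within $\unitary{\cH_d}$.

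The heart of the argument is then a routine application of the intermediate value theorem. Let $U_{\min}$ and $U_{\max}$ be the unitaries attaining the minimum and maximum fidelity identified in Theorem~\ref{th:1}. Choosing a continuous path $U_t$ from $U_{\min}$ to $U_{\max}$, the composition $t\mapsto f(U_t)$ is a continuous real-valued function on $[0,1]$ taking the value $\rF(\lambda^\downarrow(\rho),\lambda^\uparrow(\sigma))$ at $t=0$ and $\rF(\lambda^\downarrow(\rho),\lambda^\downarrow(\sigma))$ at $t=1$. By the intermediate value theorem it assumes every value in between, so the stated interval is contained in the image of $f$. Combined with the reverse containment from Theorem~\ref{th:1}, the two sets coincide.

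I do not anticipate a serious obstacle here, since the machinery is standard once continuity and path-connectedness are in place. The only point requiring a little care is the continuity of $f$ at singular $\sigma$, where $\sqrt{\sigma}$ and the matrix square root $\sqrt{\sqrt{\rho}U\sigma U^\dagger\sqrt{\rho}}$ must be checked to vary continuously; but this follows from the fact that the matrix square root is continuous on the cone of positive semidefinite matrices, so the argument goes through uniformly for all density matrices without a separate limiting argument.
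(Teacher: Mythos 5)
Your proposal is correct, and it is genuinely leaner than what the paper does. The paper also ultimately rests on the same topological fact (a continuous function on a path-connected set has an interval as its image), but it gets there by a much heavier route: it writes $U_t=\exp(tK)$, invokes the integral representation $\sqrt{A}=\frac{2}{\pi}\int_0^{+\infty}A(A+x)^{-1}\,d\mu(x)$ of the operator square root to differentiate $g(t)=\rF(\rho,U_t\sigma U_t^\dagger)$, obtains $\frac{dg}{dt}=\frac12\Tr{U_t^\dagger\sqrt{\rho}\,A_t^{-1/2}\sqrt{\rho}\,U_t[K,\sigma]}$, and uses this both to characterize the stationary unitaries (the commutation condition $[\sigma,\sqrt{\rho}A_0^{-1/2}\sqrt{\rho}]=0$) and to conclude that $g$ is differentiable, hence continuous, before appealing to path-connectedness; it then takes a union of images $g(\real)$ over all skew-Hermitian $K$, a step that is phrased somewhat loosely since a single one-parameter subgroup need not sweep out the whole interval. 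Your argument bypasses the calculus entirely: continuity of $U\mapsto\rF(\rho,U\sigma U^\dagger)$ follows from continuity of the matrix square root on the positive semidefinite cone, the two optimal unitaries from Theorem~\ref{th:1} are joined by an explicit path $U_t=\exp(\mathrm{i}tH)U_{\min}$, and the intermediate value theorem finishes the proof. This is cleaner on two counts: it works uniformly for singular $\rho$ and $\sigma$ (where the paper's $A_t^{-1/2}$ requires care about supports), and connecting the two extremizers by a path is tighter than the paper's union-over-$K$ formulation. What the paper's longer computation buys in exchange is the explicit first-order condition for extremality, which is of independent interest for the Hamiltonian-evolution discussion in its subsequent Remark, but it is not needed for the interval statement itself.
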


\begin{proof} Note that any unitary matrix $U$ can be parameterized as
$U=\exp(tK)$ for some skew-Hermitian matrix $K$. In order to prove
the set $\Set{\rF(\rho, U\sigma U^\dagger):  U\in \unitary{\cH_d}}$
is an interval, we denote
\begin{eqnarray}
g(t) = \rF(\rho,U_t\sigma U^\dagger_t) =
\Tr{\sqrt{\sqrt{\rho}U_t\sigma U^\dagger_t\sqrt{\rho}}},
\end{eqnarray}
where $U_t = \exp(tK)$ for some skew-Hermitian matrix $K$. Here
$t\mapsto U_t$ is a path in the unitary matrix space.

We need an integral representation of operator monotone function:
\begin{eqnarray*}
a^r = \frac{\sin (r\pi)}{\pi}\int^{+\infty}_0
\frac{a}{a+x}x^{r-1}dx,
\end{eqnarray*}
where $0<r<1$, $a>0$. For convenience, let $\mu(x) = x^r$. Then we
have
\begin{eqnarray*}
a^r = \frac{\sin (r\pi)}{r\pi}\int^{+\infty}_0 \frac{a}{a+x}d\mu(x),
\end{eqnarray*}
where $r\in(0,1)$, $a\in(0,+\infty)$.

Assume that all the operations are taken on the support of
operators. Given a nonnegative operator $A$, we have:
\begin{eqnarray*}
A^r = \frac{\sin (r\pi)}{r\pi}\int^{+\infty}_0
A(A+x)^{-1}d\mu(x),~~~r\in(0,1).
\end{eqnarray*}
In particular, for $r=\tfrac12$, we have
\begin{eqnarray}
\sqrt{A} = \frac 2\pi\int^{+\infty}_0 A(A+x)^{-1}d\mu(x),
\end{eqnarray}
which gives rise to
\begin{eqnarray*}
\frac{d\sqrt{A}}{dt} &=& \frac 2\pi\int^{+\infty}_0
\Br{\frac{dA}{dt}(A+x)^{-1} + A \frac{d(A+x)^{-1}}{dt}}d\mu(x)\\[2mm]
&=& \frac 2\pi\int^{+\infty}_0 (A+x)^{-1}\frac{dA}{dt}(A+x)^{-1}
xd\mu(x),
\end{eqnarray*}
and
\begin{eqnarray*}\label{322}
\Tr{\frac{d\sqrt{A}}{dt}} &=& \frac 2\pi\int^{+\infty}_0
\Tr{(A+x)^{-2}\frac{dA}{dt}} x\,d\mu(x)\\
&=& \frac 2\pi
\Tr{\Br{\int^{+\infty}_0(A+x)^{-2}xd\mu(x)}\frac{dA}{dt}}\\
&=& \frac 2\pi \Tr{\varphi(A)\frac{dA}{dt}},
\end{eqnarray*}
where
\begin{eqnarray}
\varphi(A) := \int^{+\infty}_0(A+x)^{-2}xd\mu(x) = \frac\pi4A^{-1/2}.
\end{eqnarray}

Set $A_t = \sqrt{\rho}U_t\sigma U^\dagger_t\sqrt{\rho}$. One has
\begin{eqnarray}
\frac{dA_t}{dt}= \sqrt{\rho}U_t[K,\sigma] U^\dagger_t\sqrt{\rho}.
\end{eqnarray}
Replacing $A$ with $A_t$ in \eqref{322}, we get
\begin{eqnarray*}
\frac{dg(t)}{dt} &=& \frac{d\Tr{\sqrt{A_t}}}{dt} =
\Tr{\frac{d\sqrt{A_t}}{dt}}\\
&=& \frac12 \Tr{A^{-1/2}_t\frac{dA_t}{dt}} \\
&=& \frac12
\Tr{U^\dagger_t\sqrt{\rho}A^{-1/2}_t\sqrt{\rho}U_t[K,\sigma]}.
\end{eqnarray*}
From Theorem~\ref{th:1}, we see that the maximal and minimal values
of $\rF(\rho,U\sigma U^\dagger)$ are attained for some $U$'s
implementing $[\rho, U\sigma U^\dagger]=0$, respectively. Without
loss of generality, we assume that $[\rho,\sigma]=0$. Hence the fact
that $g(t) = \rF(\rho, U_t\sigma U^\dagger_t)$, where
$U_t=\exp(tK)$, achieves its extremum at $t=0$ is characterized by
requiring that
\begin{eqnarray}
0 = \frac{d\,g(t)}{dt}|_{t=0} &=&\frac12
\Tr{K[\sigma,\sqrt{\rho}A^{-1/2}_0\sqrt{\rho}]}
\end{eqnarray}
for all skew-Hermitian matrices $K$. Thus
$[\sigma,\sqrt{\rho}A^{-1/2}_0\sqrt{\rho}]=0$, which is compatible
with $[\rho,\sigma]=0$.

The above discussion also indicates that the real function $g(t)$ is
differentiable at each point over $\real$ for all skew-Hermitian
$K$. That is, $g(t)$ is a continuous function because the unitary
matrix group is path-connected. Therefore
$$
g(\real)=\Br{\rF(\lambda^\downarrow(\rho),\lambda^\uparrow(\sigma)),
\rF(\lambda^\downarrow(\rho),\lambda^\downarrow(\sigma))},
$$
where we are implicitly taking the union over all the images of $g$
for all skew-hermitian $K$. And the set $\Set{\rF(\rho, U\sigma
U^\dagger): U\in \unitary{\cH_d}}$ is identical to the interval
$\Br{\rF(\lambda^\downarrow(\rho),\lambda^\uparrow(\sigma)),
\rF(\lambda^\downarrow(\rho),\lambda^\downarrow(\sigma))}$.
\end{proof}

\begin{remark}
A quantum system usually evolves unitarily with
$\set{U_t=\exp(-\mathrm{i}tH): t\in \real}$ according to certain
Hamiltonian $H$, rather than the whole unitary group. The problem is
then reduced to determine the optimized values: $\min _{t\in\real}
\rF(\rho, U_t\sigma U_t^\dagger)$ and $\max_{t\in\real} \rF(\rho,
U_t\sigma U_t^\dagger)$ for given density operators $\rho$ and
$\sigma$. Note that every matrix Lie group is a smooth manifold.
Thus the unitary matrix group $\unitary{\cH_d}$, a compact group, is
connected if and only if it is path-connected \cite{Baker}. It is
seen that any unitary matrix is path-connected with $\I_d$ via a
path $U_t =\exp(tK)$ for some skew-Hermitian matrix $K$, i.e.
$K^\dagger = - K$. Indeed since any unitary matrix $U$ can be
parameterized in this way for both unitary matrix $U$ and $V$, there
exists a skew-Hermitian matrix $K$ such that $UV^{-1} = \exp(K)$.
Let $U_t = \exp(tK)V$. Then $U_0 = V$ and $U_t = U$. That is, $U_t,
t\in[0,1]$ is a path between $U$ and $V$.

Hence if $[H,\rho]=0$ or $[H,\sigma]=0$, then
\begin{eqnarray*}
\max_{t\in\real} \rF(\rho,U_t\sigma U^\dagger_t) = \min_{t\in\real}
\rF(\rho,U_t\sigma U^\dagger_t) = \rF(\rho,\sigma).
\end{eqnarray*}
Assume that $[H,\rho]\neq0$ and $[H,\sigma]\neq0$, and denote
\begin{eqnarray}
g(t) := \rF(\rho, U_t\sigma U_t^\dagger).
\end{eqnarray}
Clearly since $g(t)$ is a continuous function and the unitary group
$\unitary{\cH_d}$ is compact, the extreme values of
$g(t)$ over $\real$ do exist. Thus the range of $g(t)$ is a closed
interval. But determining the extreme values is very complicated and
difficult. We leave this open question in the future research.
\end{remark}

\section{Relative entropy between unitary orbits}\label{sect:relative-entropy}

We have studied the quantum fidelity between unitary orbits. One may
also consider other measures of 'distance' instead of quantum
fidelity. In this section we consider the relative entropy between
unitary orbits of two quantum states. We first give a Lemma about
vectors and stochastic matrices.

For a given $d$-dimensional real vector $u=[u_1, u_2, \cdots,
u_d]^\t\in\real^d$, we denote
$$
u^\downarrow = \br{u^\downarrow_1,
u^\downarrow_2,\ldots, u^\downarrow_d}^\t
$$
the rearrangement of $u$ in decreasing order, $\set{u^\downarrow_i}$
is a permutation of $\set{u_i}$ and $u^\downarrow_1\geqslant
u^\downarrow_2\geqslant \cdots \geqslant u^\downarrow_d$. Similarly,
we denote
$$
u^\uparrow = \br{u^\uparrow_1, u^\uparrow_2, \cdots,
u^\uparrow_d}^\t
$$
the rearrangement of $u$ in increasing order. A real vector $u$ is
\emph{majorized} by $v$, $u\prec v$, if $\sum^k_{i=1} u^\downarrow_i
\leqslant \sum^k_{i=1} v^\downarrow_i$ for each $k=1,\ldots,d$ and
$\sum^d_{i=1} u^\downarrow_i = \sum^d_{i=1} v^\downarrow_i$. A
matrix $B=[b_{ij}]$ is called \emph{bi-stochastic} if
$b_{ij}\geqslant0$, $\sum^d_{i=1} b_{ij}=\sum^d_{j=1}b_{ij}=1$
\cite{Ando}. A real vector $u$ is majorized by $v$ if and only if
$u= Bv$ for some $d\times d$ bi-stochastic matrix $B$ \cite{Hardy}.

Denote by $\bB_d$ the set of all $d\times d$ bi-stochastic matrices.
A \emph{unistochastic} matrix $D$ is a bi-stochastic matrix
satisfying $D = U\circ \overline{U}$, where $\circ$ is the
\emph{Schur product}, defined between two matrices as $A\circ B =
[a_{ij}b_{ij}]$ for $A=[a_{ij}]$ and $B=[b_{ij}]$; $U$ is a unitary
matrix and $\overline{U}$ is the complex conjugation of $U$. We
denote $\bB^u_d$ the set of all $d\times d$ unistochastic matrices.

Let $\cS_d$ be the permutation group on the set $\set{1,2,\cdots,
d}$. For each $\pi\in \cS_d$, we define a $d\times d$ matrix
$P_\pi=[\delta _{i\pi(j)}]$, $P_\pi u = [u_{\pi(1)},\cdots,
u_{\pi(d)}]^\t$. $P_\pi$ is bi-stochastic and the set of
bi-stochastic matrices is a convex set. The \emph{Birkhoff-von
Neumann theorem} states that the bi-stochastic matrices are given by
the convex hull of the permutation matrices \cite{Watrous}: A
$d\times d$ real matrix $B$ is bi-stochastic if and only if
there exists a probability distribution $\lambda$ on $\cS_d$ such
that $B = \sum_{\pi\in \cS_d} \lambda_\pi P_\pi$.

\begin{lem}
For any two real vectors $u,v\in\real^d$, we have
\begin{eqnarray}
\Set{\inner{u}{Bv}: B\in\bB^u_d} = \set{\inner{u}{Bv}: B\in\bB_d},
\end{eqnarray}
which in turn is identical to the interval
$\Br{\inner{u^\downarrow}{v^\uparrow},
\inner{u^\downarrow}{v^\downarrow}}$.
\end{lem}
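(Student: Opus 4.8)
The statement has three sets that must be shown equal: the unistochastic image $\Set{\inner{u}{Bv}: B\in\bB^u_d}$, the bistochastic image $\set{\inner{u}{Bv}: B\in\bB_d}$, and the interval $\Br{\inner{u^\downarrow}{v^\uparrow}, \inner{u^\downarrow}{v^\downarrow}}$. My plan is to establish the chain of inclusions
$$
\Set{\inner{u}{Bv}: B\in\bB^u_d} \subseteq \set{\inner{u}{Bv}: B\in\bB_d} \subseteq \Br{\inner{u^\downarrow}{v^\uparrow}, \inner{u^\downarrow}{v^\downarrow}} \subseteq \Set{\inner{u}{Bv}: B\in\bB^u_d}.
$$
The first inclusion is immediate, since every unistochastic matrix is bistochastic, so $\bB^u_d\subseteq\bB_d$. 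The work lies in the second and third inclusions.

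\textbf{Second inclusion (bistochastic image lies in the interval).} First I would fix $B\in\bB_d$ and show $\inner{u^\downarrow}{v^\uparrow}\leqslant\inner{u}{Bv}\leqslant\inner{u^\downarrow}{v^\downarrow}$. The upper bound is exactly the rearrangement inequality \eqref{l33} cited earlier, applied after noting that $Bv\prec v$ (since $B$ is bistochastic, this is the characterization of majorization recalled in the text), together with the fact that $\inner{u}{Bv}$ is maximized over the orbit of $v$ when components are oppositely-or-equally ordered; more directly, writing $B=\sum_{\pi}\lambda_\pi P_\pi$ by Birkhoff--von Neumann, I get $\inner{u}{Bv}=\sum_\pi\lambda_\pi\inner{u}{P_\pi v}$, a convex combination of the values $\inner{u}{P_\pi v}$, each of which satisfies the classical rearrangement bounds $\inner{u^\downarrow}{v^\uparrow}\leqslant\inner{u}{P_\pi v}\leqslant\inner{u^\downarrow}{v^\downarrow}$. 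Hence the convex combination lies in the interval as well. This approach is clean because it reduces the matrix inequality to the elementary rearrangement inequality for permutations.

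\textbf{Third inclusion (surjectivity onto the interval via unistochastic matrices).} This is the main obstacle. I must show every value $c$ in the closed interval $\Br{\inner{u^\downarrow}{v^\uparrow}, \inner{u^\downarrow}{v^\downarrow}}$ is actually attained as $\inner{u}{Dv}$ for some unistochastic $D$. The endpoints are attained by permutation matrices (which are unistochastic, arising from permutation unitaries), realizing the two rearrangement extremes. For interior points my plan is a connectedness/intermediate-value argument: parameterize a continuous path of unitaries $U_\theta$ connecting the identity to the permutation unitary achieving the opposite ordering — for instance, by rotating in the relevant two-dimensional coordinate planes — and set $D_\theta = U_\theta\circ\overline{U_\theta}$. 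The map $\theta\mapsto\inner{u}{D_\theta v}$ is then continuous, and since $\unitary{\cH_d}$ is path-connected, the image is connected, hence an interval; as it contains both endpoints it must contain all of $\Br{\inner{u^\downarrow}{v^\uparrow}, \inner{u^\downarrow}{v^\downarrow}}$. The delicate point I anticipate is checking that the path of \emph{unistochastic} matrices genuinely sweeps continuously between the two extreme permutations while staying unistochastic throughout; this is where I would be most careful, using that the Schur product $U\circ\overline U$ depends continuously on $U$ and that both target permutations lift to permutation unitaries lying in the same connected component. Combining the three inclusions yields equality of all three sets, completing the proof.
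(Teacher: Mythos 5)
Your proof is correct, but your handling of the key step --- showing the interval is contained in the unistochastic image --- is genuinely different from the paper's. The paper first identifies $\set{\inner{u}{Bv}: B\in\bB_d}$ with the interval via Birkhoff--von Neumann (the image of the convex set $\bB_d$ under the linear map $B\mapsto\inner{u}{Bv}$ is the convex hull of the finitely many permutation values, i.e.\ the interval between the rearrangement extremes), and then closes the gap between $\bB^u_d$ and $\bB_d$ by invoking the Horn-type theorem (cited as Watrous, Thm.~11.2): for any $D\in\bB_d$ one has $Dv\prec v$, hence there exists a unistochastic $D'$ with $Dv=D'v$, so $\inner{u}{Dv}=\inner{u}{D'v}$. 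You instead bypass that theorem entirely with a topological argument: both extreme values are attained by permutation matrices, which are unistochastic; the unitary group $\unitary{\cH_d}$ is path-connected, so a continuous path $U_\theta$ joins the two permutation unitaries; and $\theta\mapsto\inner{u}{(U_\theta\circ\overline{U_\theta})\tinyspace v}$ is continuous, so by the intermediate value theorem its image is an interval containing both endpoints. One remark: the ``delicate point'' you flag --- whether the path ``stays unistochastic'' --- is not delicate at all: $U_\theta\circ\overline{U_\theta}$ is unistochastic \emph{by definition} for every unitary $U_\theta$, and the Schur product is continuous in $U_\theta$, so your construction works without further checking. What each approach buys: yours is more elementary and self-contained (it replaces a substantive majorization theorem by path-connectedness plus the intermediate value theorem, the same device the paper itself uses in Theorem~\ref{th:2}); the paper's route proves the stronger vector-level statement $Dv = D'v$ and gives the set identity $\Set{\inner{u}{Bv}: B\in\bB^u_d} = \set{\inner{u}{Bv}: B\in\bB_d}$ directly, rather than deducing it from the cyclic chain of inclusions as you do.
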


\begin{proof}
Firstly, we show that
\begin{eqnarray}\label{first}
\set{\inner{u}{Bv}: B\in\bB_d} =
\br{\inner{u^\downarrow}{v^\uparrow},
\inner{u^\downarrow}{v^\downarrow}}.
\end{eqnarray}
From the Birkhoff-von Neumann theorem, we see that each $B\in\bB_d$
can be written as a convex combination of permutation matrices:
$$
B = \sum_{\pi\in\cS_d}\lambda_\pi P_\pi,\quad\forall\pi\in\cS_d:
\lambda_\pi\geqslant0, \sum_{\pi\in\cS_d}\lambda_\pi = 1.
$$
Thus $\inner{u}{Bv} =
\sum_{\pi\in\cS_d}\lambda_\pi\inner{u^{\downarrow}}{P_\pi
v^{\downarrow}}$. Since for any real numbers $x_1\leqslant \cdots\leqslant x_d$ and
$y_1\leqslant \cdots\leqslant y_d$, one has
$$
\sum _{i=1}^d x_i
y_{d+1-i}\leqslant \sum _{i=1}^d x_1y_{\pi(i)}\leqslant \sum_{i=1}^d
x_iy_i
$$
under any permutation $\pi$, it is seen that
\begin{eqnarray}
\inner{u^{\downarrow}}{v^{\uparrow}}\leqslant\inner{u^{\downarrow}}{P_\pi
v^{\downarrow}}\leqslant\inner{u^{\downarrow}}{v^{\downarrow}},\quad
\forall \pi\in \cS_d.
\end{eqnarray}

As the set $\Set{\inner{u^{\downarrow}}{P_\pi v^{\downarrow}}:
\pi\in\cS_d}$ is discrete and finite, it follows that the convex
hull of this set is a one-dimensional simplex with their boundary
points $\inner{u^{\downarrow}}{v^{\uparrow}}$ and
$\inner{u^{\downarrow}}{v^{\downarrow}}$. Therefore \eqref{first}
holds.

Secondly, we show that $\set{\inner{u}{Bv}: B\in\bB^u_d} = \set{\inner{u}{Bv}: B\in\bB_d}$.
Indeed, since $\bB^u_d$ is a proper subset of $\bB_d$, one has
$\set{\inner{u}{Bv}: B\in\bB^u_d} \subset \set{\inner{u}{Bv}:
B\in\bB_d}$. Now for arbitrary $D\in\bB_d$, clearly $Dv\prec v$, there exists a
unistochastic matrices $D'\in\bB^u_d$ such that $Dv = D'v$
\cite[Thm.11.2.]{Watrous}. This implies that
$\inner{u}{Dv}=\inner{u}{D'v}$ in $\Set{\Inner{u}{Dv}:
D\in\bB^u_d}$. That is
$$\set{\inner{u}{Bv}: B\in\bB^u_d} \supset
\set{\inner{u}{Bv}: B\in\bB_d}.$$ Finally, they are identically to
an interval
$\Br{\Inner{u^\downarrow}{v^\uparrow},\Inner{u^\downarrow}{v^\downarrow}}$.
\end{proof}

In fact, the following consequence can be derived directly from
\eqref{l33} and the Lemma above,
\begin{eqnarray}
\inner{\lambda^\downarrow(A)}{\lambda^\uparrow(B)}\leqslant
\Tr{AUBU^\dagger} \leqslant
\inner{\lambda^\downarrow(A)}{\lambda^\downarrow(B)}
\end{eqnarray}
for arbitrary $U\in \unitary{\cH_d}$. Moreover, since
\begin{eqnarray*}
\Tr{AUBU^\dagger} &=&
\sum_{i,j}\lambda^\downarrow_i(A)\lambda^\downarrow_j(B)\abs{\Innerm{a_i}{U}{b_j}}^2= \inner{\lambda^\downarrow(A)}{D_U\lambda^\downarrow(B)},
\end{eqnarray*}
where $D_U = \Br{\abs{\Innerm{a_i}{U}{b_j}}^2} \in \bB^u_d$, one has
\begin{eqnarray}
\Set{\Tr{AUBU^\dagger}: U\in \unitary{\cH_d}} &=&
\Set{\inner{\lambda^\downarrow(A)}{D_U\lambda^\downarrow(B)}:
D_U\in\bB^u_d} \\
&=& \Br{\inner{\lambda^\downarrow(A)}{\lambda^\uparrow(B)},
\inner{\lambda^\downarrow(A)}{\lambda^\downarrow(B)}}.
\end{eqnarray}
Therefore the set $\Set{\Tr{AUBU^\dagger}: U\in \unitary{\cH_d}}$ is
an interval:
$$
\begin{array}{l}
\Set{\Tr{AUBU^\dagger}: U\in \unitary{\cH_d}}= \Br{
\inner{\lambda^\downarrow(A)}{\lambda^\uparrow(B)},
\inner{\lambda^\downarrow(A)}{\lambda^\downarrow(B)}}.
\end{array}
$$

The relative entropy of two quantum states $\rho$ and $\sigma$ is
defined by
\begin{eqnarray}
\rS(\rho ||\sigma ) &=& \Tr{\rho (\log\rho - \log\sigma)}
\end{eqnarray}
if $\supp(\rho)\subseteq\supp(\sigma)$, where $\supp(\rho)$ is the support
of $\rho$ defined as the span of the eigenvectors
with the corresponding eigenvalues great than zero. Let
$\rH(p||q)$ denote the classical relative entropy of two probability
distributions $p=\set{p_j}$ and $q=\set{q_j}$,
$$
\rH(p||q) =
\begin{cases}
\displaystyle\sum_j p_j(\log p_j - \log q_j),& \mathrm{if}\ \supp(p)\subseteq\supp(q),\\[2mm]
+\infty,& \mathrm{otherwise}.
\end{cases}
$$
Since $\rS(U\rho U^\dagger||\sigma) = -\rS(\rho) -\Tr{U\rho
U^\dagger \log\sigma}$, from \eqref{l33} and the analysis above, we
have the following results for relative entropy:

\begin{thrm}\label{th:3}
For arbitrary given two quantum states $\rho,\sigma\in
\density{\cH_d}$, with $\sigma$ full-ranked,
\begin{eqnarray}
\min_{U\in \unitary{\cH_d}} \rS(U\rho U^\dagger||\sigma) &=&
\rH(\lambda^{\downarrow}(\rho)||\lambda^{\downarrow}(\sigma)),\\
\max_{U\in \unitary{\cH_d}} \rS(U\rho U^\dagger||\sigma) &=&
\rH(\lambda^{\downarrow}(\rho)||\lambda^{\uparrow}(\sigma)).
\end{eqnarray}
Moreover, the set $\Set{\rS(U\rho U^\dagger||\sigma): U\in
\unitary{\cH_d}}$ is an interval,
$$
\begin{array}{l}
\Set{\rS(U\rho U^\dagger||\sigma): U\in
\unitary{\cH_d}}=\Br{\rH(\lambda^\downarrow(\rho)||\lambda^\downarrow(\sigma)),
\rH(\lambda^\downarrow(\rho)||\lambda^\uparrow(\sigma))}.
\end{array}
$$
\end{thrm}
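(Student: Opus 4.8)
The plan is to reduce the optimization of the relative entropy to the trace optimization already solved earlier in this section. Starting from the identity recorded just above the statement,
$$
\rS(U\rho U^\dagger\|\sigma) = -\rS(\rho) - \Tr{U\rho U^\dagger\log\sigma},
$$
I first observe that the term $-\rS(\rho)$ (the negative von Neumann entropy of $\rho$) does not depend on $U$, since $\log(U\rho U^\dagger) = U(\log\rho)U^\dagger$ and hence $\Tr{U\rho U^\dagger\log(U\rho U^\dagger)} = \Tr{\rho\log\rho}$ is unitarily invariant. Therefore optimizing $\rS(U\rho U^\dagger\|\sigma)$ over $U$ is equivalent to optimizing the single term $\Tr{U\rho U^\dagger\log\sigma}$, with the crucial caveat that the minus sign reverses the direction: minimizing $\rS$ corresponds to \emph{maximizing} the trace term, and maximizing $\rS$ corresponds to \emph{minimizing} it. The hypothesis that $\sigma$ is full-ranked is used here to guarantee that $\log\sigma$ is a well-defined (finite) Hermitian operator and that $\supp(U\rho U^\dagger)\subseteq\supp(\sigma)=\cH_d$ for every $U$, so that $\rS(U\rho U^\dagger\|\sigma)$ is finite throughout.

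Next I apply the interval result for $\Tr{AUBU^\dagger}$ established earlier in this section. By cyclicity of the trace, $\Tr{U\rho U^\dagger\log\sigma} = \Tr{(\log\sigma)\,U\rho U^\dagger}$, so I set $A = \log\sigma$ and $B = \rho$. The earlier result then gives
$$
\Set{\Tr{U\rho U^\dagger\log\sigma}: U\in\unitary{\cH_d}} = \Br{\inner{\lambda^\downarrow(\log\sigma)}{\lambda^\uparrow(\rho)}, \inner{\lambda^\downarrow(\log\sigma)}{\lambda^\downarrow(\rho)}}.
$$

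It remains to rewrite the two endpoints in terms of the classical relative entropy $\rH$. Since $\log$ is monotonically increasing, the eigenvalues of $\log\sigma$ listed in decreasing order are precisely $\log\lambda^\downarrow_j(\sigma)$; that is, $\lambda^\downarrow(\log\sigma)$ is the componentwise logarithm of $\lambda^\downarrow(\sigma)$. Hence $\inner{\lambda^\downarrow(\log\sigma)}{\lambda^\downarrow(\rho)} = \sum_j \lambda^\downarrow_j(\rho)\log\lambda^\downarrow_j(\sigma)$, while reversing the index in the increasing-order pairing gives $\inner{\lambda^\downarrow(\log\sigma)}{\lambda^\uparrow(\rho)} = \sum_j \lambda^\downarrow_j(\rho)\log\lambda^\uparrow_j(\sigma)$. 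Combining with $\sum_j\lambda^\downarrow_j(\rho)\log\lambda^\downarrow_j(\rho) = -\rS(\rho)$ and the definition of $\rH$, the maximum of the trace term yields $\min_U\rS(U\rho U^\dagger\|\sigma) = \rH(\lambda^\downarrow(\rho)\|\lambda^\downarrow(\sigma))$ and the minimum of the trace term yields $\max_U\rS(U\rho U^\dagger\|\sigma) = \rH(\lambda^\downarrow(\rho)\|\lambda^\uparrow(\sigma))$.

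Finally, for the interval statement I note that $\rS(U\rho U^\dagger\|\sigma)$ is an affine, strictly decreasing function of $\Tr{U\rho U^\dagger\log\sigma}$. Because the earlier result shows that the trace term sweeps out the entire closed interval as $U$ ranges over $\unitary{\cH_d}$, its image under this affine map is again the full closed interval between the two endpoints computed above, giving $\Set{\rS(U\rho U^\dagger\|\sigma):U\in\unitary{\cH_d}} = \Br{\rH(\lambda^\downarrow(\rho)\|\lambda^\downarrow(\sigma)), \rH(\lambda^\downarrow(\rho)\|\lambda^\uparrow(\sigma))}$. I expect no genuine difficulty in the argument; the one place demanding care is the bookkeeping in the previous two steps — tracking the sign reversal and the increasing/decreasing eigenvalue orderings so that each optimum of the trace term is matched to the correct optimum of $\rS$ and to the correct ordering convention in the definition of $\rH$.
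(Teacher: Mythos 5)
Your proposal is correct and follows essentially the same route as the paper: the paper's (very terse) justification is precisely the identity $\rS(U\rho U^\dagger\|\sigma) = -\rS(\rho) - \Tr{U\rho U^\dagger\log\sigma}$ combined with the interval result $\Set{\Tr{AUBU^\dagger}: U\in\unitary{\cH_d}} = \Br{\inner{\lambda^\downarrow(A)}{\lambda^\uparrow(B)}, \inner{\lambda^\downarrow(A)}{\lambda^\downarrow(B)}}$ established just before the theorem, applied with $A=\log\sigma$, $B=\rho$. Your write-up supplies the bookkeeping (sign reversal, monotonicity of $\log$ acting on the ordered spectrum, and the affine map carrying the closed interval onto the closed interval) that the paper leaves implicit, and all of it checks out.
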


Theorem \ref{th:3} shows that the maximal and minimal values of the relative
entropy between the unitary orbits of two quantum states are determined by
the classical relative entropy of probability distributions
given by the eigenvalues of two density matrices.
In addition, one can show that Theorem \ref{th:3} also gives rise to the following inequality:
\begin{eqnarray}
\rH(\lambda^\downarrow(\rho)||\lambda^\downarrow(\sigma))\leqslant
\rS(\rho||\sigma) \leqslant
\rH(\lambda^\downarrow(\rho)||\lambda^\uparrow(\sigma)).
\end{eqnarray}

\section{Discussions}\label{sect:conclusion}

We have solved the problem of evaluating the fidelity between
unitary orbits of quantum states. The analytical formulas for the
minimal and maximal values have been obtained. It has been also
proved that the fidelity goes through the whole interval between the
minimal and the maximal values.

As a "measure of the distance" between the fixed state and the
evolved one, we have used the fidelity $\rF(\rho,\sigma(t))$, where
$\sigma(t) = e^{-\mathrm{i}tH}\sigma e^{\mathrm{i}tH}$. The analysis
can be also analogously used for other kinds of measures, for
instance, the constrained optimization problem for the relative
entropy:
\begin{eqnarray}
\max_{t\in\real} \rS(U_t\rho
U^\dagger_t||\sigma)\quad\text{and}\quad \min_{t\in\real}
\rS(U_t\rho U^\dagger_t||\sigma),
\end{eqnarray}
where $U_t = e^{-\mathrm{i}tH}$ is the unitary dynamics generated by
a Hamiltonian $H$. The above constrained optimization problems are
related with the speed of quantum dynamical evolution
\cite{Giovannetti,Taddei}.

Our results can be also applied to other subjects in quantum
computation and quantum information processing, such as optimal
quantum control, in which the state $\rho(0)$ at time zero evolves
into the state $\rho(t)$ at time $t$, $\rho(t)=U(t)\rho(0)U^\dag(t)$
for some unitary operator $U(t)$. The unitary operator $U(t)$ is
determined by the Hamiltonian of the system $H(t)$ satisfying the
time-dependent Schr\"odinger equation, $\dot{U}(t)=-iH(t)U(t)$, with
$U(0)=\I$ the identity operator. $H(t)$ is a Hermitian matrix of the
form, $H(t)=H_d+\sum_{i=1}^m v_i(t)H_i$, where $H_d$ is called the
drift Hamiltonian which is internal to the system, and $\sum_{i=1}^m
v_i(t)\,H_i$ is the control Hamiltonian such that the coefficients
$v_i(t)$ can be externally manipulated \cite{NRS,lyf1}. If the target
state is not in the scope of the states that can be generated be the
given Hamiltonian. Then one has to find a "best" unitary operator to
reach a final state such that the best fidelity between the target
state and the final state is attained.

The results obtained in this context can be also used to study the
modified version of super-additivity of relative entropy and that of
sub-multiplicativity of fidelity in \cite{Zhang2013}. In fact, the
concerned problems have a surprisingly rich mathematical structure and
need to be investigated further.

\subsubsection*{Acknowledgements} Zhang would like to thank Shunlong Luo
and Hai-Jiang Yu for useful discussions and comments. The
first-named author is funded by NSFC (No.11301124) and HDU (KYS075612038).
Fei is supported by the NSFC (No.11275131).

\end{document}